\newtheorem{theorem}{Theorem}
\newcommand{\compl}[1]{{#1}^c}
\newcommand{\lpr}{\underline{P}}
\newcommand{\upr}{\overline{P}}
\newcommand{\reals}{\mathbb{R}}
\title[Robust detection of exotic infectious diseases in animal herds]{Robust detection of exotic infectious diseases in animal herds: A comparative study of three decision methodologies under severe uncertainty}
\author{Matthias C. M. Troffaes}
\address{Durham University, Dept. of Mathematical Sciences, Science Laboratories, South Road, Durham
DH1 3LE, United Kingdom} \email{matthias.troffaes@gmail.com}
\author{John Paul Gosling}
\address{University of Leeds, School of Mathematics,
Leeds, LS2 9JT, United Kingdom} \email{j.p.gosling@leeds.ac.uk}
\keywords{Exotic disease, lower prevision, info-gap, maximality, minimax, robustness, inspection, protocol}
\begin{document}

\begin{abstract}
  When animals are transported and pass through customs, some of them
  may have dangerous infectious diseases. Typically, due to the cost
  of testing, not all animals are tested: a reasonable selection must
  be made. How to test effectively whilst avoiding costly disease outbreaks?
  First, we extend a model proposed in the literature
  for the detection of invasive species to suit our purpose,
  and we discuss the main sources of model uncertainty, many of which are
  hard to quantify.
  Secondly, we explore and compare three decision methodologies on the problem at hand, namely,
  Bayesian statistics,
  info-gap theory and imprecise probability theory, all of which are
  designed to handle severe uncertainty.
  We show that, under rather general conditions, every
  info-gap solution is maximal with respect to a suitably chosen imprecise probability model, and that therefore, perhaps surprisingly, the set of maximal options can be
  inferred at least partly---and sometimes entirely---from an info-gap analysis.
\end{abstract}

\maketitle

\thispagestyle{fancy}

\section{Introduction}

This paper concerns the inspection of imported herds
of animals for signs of known or unknown major exotic infectious
diseases. Imports and exports of animals represent a significant
contribution to UK agriculture. Even though imports are subject to
strict controls at the UK border under EU and national rules, there is a real risk of animal diseases being introduced. F\`evre et al. \cite{2006:fevre} review the problems associated with animal movement and the spread of disease.

We will build further on the work of Moffitt
et al. \cite{2008:moffitt}, who study inspection protocols
for shipping containers of invasive species, employing info-gap
theory \cite{2001:benhaim} to model the severely uncertain number of
infested items.
The aim of their study is to realistically take into account economical
considerations (actual costs of testing, and of
invasive species passing through customs), whilst also soundly handling
the enormous uncertainty.

A key feature of their, and also our, problem is that exact probabilities of the constituent events are very hard
to come by \cite{2006:moffit:osteen}. This motivates the use of robust
uncertainty models and decision tools,
such as info-gaps \cite{2001:benhaim} (i.e. robust satisficing) as in
the original study,
but also
Bayesian statistics \cite{1985:berger}
and
imprecise probabilities \cite{2007:troffaes:decision:intro},
as we will do in this paper.

Our study, using each of these decision methodologies, leads us
to surmise a connection between info-gap analysis and imprecise
probability theory ($\Gamma$-minimax and maximality in particular). We
prove that the perceived connection is no coincidence, and we
establish a rigorous theoretical link between the two approaches.

The paper is organised as follows. Section~\ref{sec:probdesc}
introduces the problem of animal inspection, defines the model,
discusses various uncertainties involved, and derives an expression
for the expected loss under a simple binomial model for
infection. Section~\ref{sec:decision} solves the inspection problem,
first by Bayesian analysis, then using an info-gap model, and finally using an imprecise probability (or, robust Bayesian)
model with maximality.
These results are discussed in
Section~\ref{sec:discussion}, where we formally define an info-gap
model based on a nested set of imprecise probability models,
and establish the theoretical connections between 
info-gap,
$\Gamma$-minimax, and maximality.
Section~\ref{sec:conclusion}
concludes the paper.

\section{Animal Herd Testing}
\label{sec:probdesc}

In this section, we extend a model, proposed by \cite{2008:moffitt}  for
the detection of invasive species, to suit our purpose:
\begin{itemize}
\item we
explicitly take specificity and sensitivity into account in order to
allow for imperfect testing, 
\item we take into account an additional
cost term for terminating the herd in case an infection is detected,
and
\item we model the occurrence of diseased animals in the herd as a
binomial process, under a worst-case assumption of independence of infections between
animals.
\end{itemize}

\subsection{Model Description}

Consider a herd of $n$ animals, of which $m$ are tested---the problem
is to choose $m$ optimally.
The uncertain
number of diseased animals in the herd is denoted by $d$.
The test has sensitivity---the probability that a diseased animal
tests positive---equal to $p$, and specificity---the probability that
a healthy animal tests negative---equal to $q$.

Testing $m$ animals costs $c(m)$ utiles. If $d$ diseased animals
pass inspection undetected, we incur a cost of $a(d)$ utiles.
When at least one diseased animal is detected,
then, typically, the whole herd is terminated, costing $t(n)$ utiles.

Following \cite[p.~295, Sec.~3]{2008:moffitt},
in the numerical examples that follow, we take
\begin{align}
  c(m)&=1000-2000m+1000m^2 \\
  a(d)&=
  \begin{cases}
  0 & \text{if }d=0 \\
  a & \text{if }d\ge 1
  \end{cases}
  && (a=10\,000\,000)
\end{align}
Moffitt et al. \cite{2008:moffitt} consider $n$ between $250$ and $2\,500$,
do not need to consider the cost of termination ($t(n)=0$),
and assume perfect testing ($p=q=1$).
For our problem, in numerical examples that follow, we take
\begin{align}
  n&=250 \\
  t(n)&=400n=100\,000 \\
  p&=0.9999 \\
  q&=0.999
\end{align}
so we assume that a diseased animal
tests positive with probability $0.9999$,
and a healthy animal tests negative with probability $0.999$.
For
reference, if $q=0.999$, then probability that all animals in a healthy herd
of size $n=250$ test negative is $q^n=0.78$.

\subsection{Model Uncertainties}

Obviously, many of these values are rather uncertain.
The only values we are pretty certain of are
the number of animals $n$ in the herd,
the cost of testing $c(n)$,
and the cost of termination $t(n)$.

Due to the necessity that the herd must have valid health
documentation, we would expect that the number of infected animals $d$
would be low. Additional inspection by veterinary officials is costly
and depends on the inspecting official's ability to spot signs of
infectious disease like pathological lesions and abnormal
behaviour. Of course, the level of experience and competency will vary
from official to official, but the testing procedure should be
thorough enough for us to be confident of both a high sensitivity,
$p$, and specificity, $q$. In addition to this, the government would
prefer the most sensitive test possible (within budgetary
constraints), even if specificity was slightly compromised, because a
rare false positive would be better for the prevention of disease
entry than a rare false negative. Hence, we would expect
$p>q$. Further discussion of this can be found in \cite{1997:zeman}.

Of course, in general, having values for $p$ and $q$ as high as $0.9999$ and $0.999$
is unrealistic. For
most tests, the developers have aimed at getting a high value for the
specificity and sensitivity suffers. However, there are examples in
animal disease testing where both the sensitivity and specificity are
this high. For example, the virus antibody test for caprine
arthritis-encephalitis claims sensitivity and specificity values of
over 99.5\% \cite{2003:herrmann} and near perfect sensitivity and
specificity have been estimated for the polymerase chain reaction test
for parasites in fish \cite{2000:enoe}.

Regarding the cost $a$ of an infection passing through customs,
some historical data is available.
For example, instances of major disease outbreaks in the last couple of decades
include BSE where public spending was over \pounds 5 billion, and the
foot and mouth outbreak in 2001 which costed the UK government \pounds
2.6 billion \cite{2009:defra:animal:health}. These experiences show
that there is great variation in the level of costs of exotic disease
outbreaks. Due to the exceptional nature of the outbreaks, there is
limited evidence on which to base cost assessments. Therefore, there
is great uncertainty about what may happen in the future.

Outbreaks of any particular exotic disease are generally rare or may
never have occurred at all. Also, diseases change as new strains develop;
consequently, the possibility of new diseases arriving can change
rapidly. For example, until a few years ago, bluetongue was considered
extremely unlikely in some European countries, but now outbreaks are
expected every couple of years.

In late 2009, an elicitation exercise was carried out with government
experts to help quantify the average annual costs to the UK government
of exotic infectious disease outbreaks and the uncertainty about those
estimates \cite{2011:gosling:exotic:diseases}. In that exercise, it
was clear that the costs are severely uncertain even when the
disease was known (for example, foot and mouth is an exotic infectious
disease). A major contributor to the uncertainty about the overall
cost was the possibility of an outbreak of an unknown infectious
disease, which
could cost anywhere from \pounds 0.5 billion to \pounds 6 billion.

The scale and costs of an outbreak will depend on the length of time
between the diseased animal entering circulation and the disease's
presence being confirmed, and the speed and effectiveness of the
government's response. The eventual costs are influenced by any public
health implications and the effects of disease controls on other
industries. The main elements of the costs due to control measures
include: the disposal of and payments for culled animals; the tracing,
testing and diagnosis of animals; the cleaning and disinfection of
infected premises; and administrative costs in managing the
outbreak. The size of these costs will vary according to the scale of
the outbreak with key factors being the number of infected premises,
the numbers of animals culled, and the duration of the outbreak. These
types of factors are considered in greater detail in
\cite{2009:defra:animal:health} and \cite{1995:garner}.

A serious study of how all uncertainties involved could be taken into
account in the model would of course be extremely interesting, but is
beyond the goal of this paper. Instead, in this initial study, following
\cite{2008:moffitt} and many others, for now we will focus on the main
uncertainty, that is, the number of diseased animals $d$, and simply
assume reasonable values for the remaining parameters.
Such simplistic model helps to illustrate the
methodological differences and to motivate the theory.

\subsection{Expected Loss}

First, we derive the expected loss, in case all parameters of the
problem are perfectly known, including the number of diseased animals
$d$. Clearly, conditional on $d$, the expected loss is:
\begin{equation}
  L(m,d,p,q,c,a,t)=c(m)+t(n)\Pr(T|d)+a(d)\Pr(T^c|d)
\end{equation}
where $T$ denotes termination of
the herd, that is, the event that at least one diseased animal is
detected, and $T^c$ denotes its complement, that is, the event that
the herd passes inspection.

Let us deduce $\Pr(T^c|d)$.
First, if the test group of size $m$ is sampled randomly and without replacement,
then the probability of
exactly $z$ diseased animals in the test group follows a
hypergeometric distribution:
\begin{equation}
  \Pr(z|d)=\frac{\binom{d}{z}\binom{n-d}{m-z}}{\binom{n}{m}}.
\end{equation}

Next, we calculate the probability of non-termination
given $z$ diseased animals in the test group, that is
$\Pr(\compl{T}|d,z)$. If $d=0$, then the probability of
non-termination is the probability of all healthy animals in the
sample testing negative, so $\Pr(\compl{T}|0,z)=q^m$. If $d\ge 1$, then
given $z$ diseased animals in the sample, non-termination occurs when
none of the $z$ diseased animals tests positive and all of the $m-z$
healthy animals test negative. Hence, in all cases,
\begin{equation}
\Pr(\compl{T}|d,z) = (1-p)^z q^{m-z}.
\end{equation}
By the law of total probability,
\begin{align}
\Pr(\compl{T}|d)
\nonumber
&= \sum_{z=0}^{d} \Pr(\compl{T}|d,z)\Pr(z|d) \\
\label{eq:prob:nontermination}
&= \sum_{z=0}^{d} (1-p)^z q^{m-z}  \dfrac{\binom{d}{z}\binom{n-d}{m-z}}{\binom{n}{m}}.
\end{align}

Now we have all the ingredients to calculate the total expected loss
if we choose to test $m$ out of $n$ animals:
\begin{align}
  L(m,d,p,q,c,a,t)
  &=
  c(m)+t(n)+(a(d)-t(n))\Pr(\compl{T}|d)
  \\
  \intertext{or, if $a'(n,d)=a(d)-t(n)$ denotes the termination adjusted cost of apocalypse,}
  &=
  c(m)+t(n)+a'(n,d)\Pr(\compl{T}|d)
\end{align}
where $\Pr(\compl{T}|d)$ is given by Eq.~\eqref{eq:prob:nontermination}.
Figure~\ref{fig:loss} depicts the expected loss for a few typical
cases.

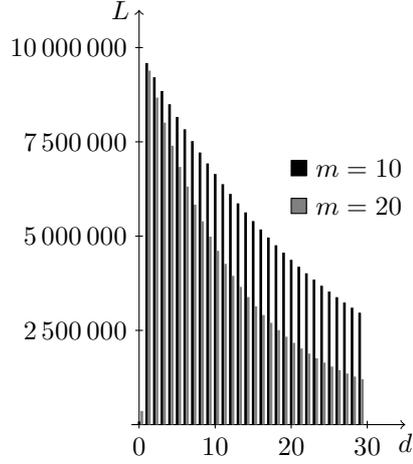
\begin{figure}
  \centering
  \begin{tikzpicture}[xscale=0.1,yscale=0.5]
    \draw[->] (-1,0) -- (35,0) node[below] {$d$};
    \draw[->] (0,-0.1) node[below] {$0$} -- (0,11) node[left] {$L$};
    \draw (10,0.1) -- (10,-0.1) node[below] {$10$};
    \draw (20,0.1) -- (20,-0.1) node[below] {$20$};
    \draw (30,0.1) -- (30,-0.1) node[below] {$30$};
    \draw (0.5,2.5) -- (-0.5,2.5) node[left] {$2\,500\,000$};
    \draw (0.5,5) -- (-0.5,5) node[left] {$5\,000\,000$};
    \draw (0.5,7.5) -- (-0.5,7.5) node[left] {$7\,500\,000$};
    \draw (0.5,10) -- (-0.5,10) node[left] {$10\,000\,000$};
    \draw[line width=1pt,color=black] plot[ycomb] file {infectious-loss-m10.table};
    \draw[line width=1pt,xshift=10pt,color=gray] plot[ycomb] file {infectious-loss-m20.table};
    \draw[fill=black] (20,7) -- (22,7) -- (22,6.6) node[midway,right] {$m=10$} -- (20,6.6) -- cycle;
    \draw[fill=gray] (20,6) -- (22,6) -- (22,5.6) node[midway,right] {$m=20$} -- (20,5.6) -- cycle;
  \end{tikzpicture}
  \caption{Loss as a function of the number of diseased animals for $m=10$ and $m=20$.}\label{fig:loss}
\end{figure}

\subsection{A Binomial Model for Infection}

Moffitt et al. \cite{2008:moffitt} consider an info-gap model directly
over the number of diseased animals $d$, which leads to a rather
tricky optimisation problem. Instead, we will consider the (highly
uncertain) probability $r$ that an animal is infected, and derive the
expected loss as a function of $r$. Although we do not explore this
topic further in this paper, this also paves the
way to modelling spatial dependencies between infections in the herd,
leading to more optimal testing strategies.

So, assume that each animal has a probability $r$ of being infected;
for simplicity, for now, we assume that one animal being diseased does
not affect another animal being diseased. Obviously, this will
generally not be satisfied, and more realistically, we would expect a
positive correlation, resulting in diseased animals being clustered
together in the herd. Assuming independence essentially
amounts to a worst case study: at the other extreme end, if one
diseased animal would immediately infect the whole herd, then it would
be sufficient to test only a single animal, as $d=0$ and $d=n$ would
be the only two possibilities.

Under the worst case assumption of independence, the probability of having $d$
out of $n$ animals infected is:
\begin{equation}\label{eq:prob:d:given:r:binom}
  \Pr(d|r)=\binom{n}{d} r^d(1-r)^{n-d}
\end{equation}
The expected loss is:
\begin{equation}
  E(L(m,\cdot,p,q,c,a,t)|r)=\sum_{d=0}^nL(m,d,p,q,c,a,t)\Pr(d|r)
\end{equation}
From now onwards, we will simply write $L(m|r)$ instead of $E(L(m,\cdot,p,q,c,a,t)|r)$ in order to simplify notation.
Figure~\ref{fig:expected:loss} depicts $L(m|r)$ as a function of $m$ for
a few typical situations.

\begin{figure}
  \centering
  \begin{tikzpicture}[xscale=0.08,yscale=3.47826086957]
\draw[->] (-0.625,0.2) -- (31.25,0.2) node[right] {$m$};
\draw[->] (0,0.185625) -- (0,2.64375) node[above] {$E(L|r) / 10^6$};
\node at (0,0.2) [below] {$0$};
\draw (25,0.185625) -- (25,0.214375) node[below] {$25$};
\node at (0,0.2) [left] {$0.2$};
\draw (-0.625,2.5) -- (0.625,2.5) node[left] {$2.5$};
\draw[solid, thick]  plot[smooth] file {infectious-expected-loss-0.table};
\draw[solid, thick]  plot[smooth] file {infectious-expected-loss-1.table};
\draw[solid, thick]  plot[smooth] file {infectious-expected-loss-2.table};
\draw[solid, thick]  plot[smooth] file {infectious-expected-loss-3.table};
\end{tikzpicture}
  \caption{Expected loss $L(m|r)$ as a function of the test group size
    $m$, for $r=0.00010$, $r=0.00025$, $r=0.00050$, and $r=0.00100$,
    from bottom to top.
    }
  \label{fig:expected:loss}
\end{figure}
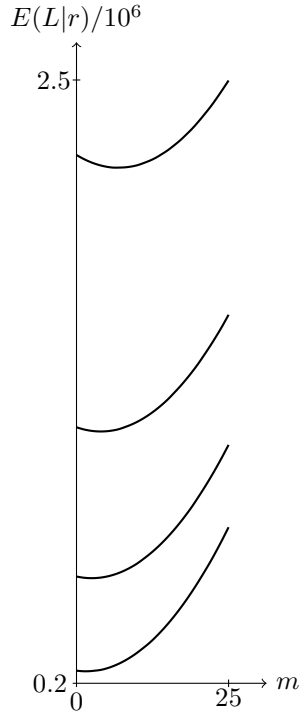

\section{Decision Analysis}
\label{sec:decision}

In this section, we explore and compare three decision methodologies,
  designed for severe uncertainty, on the problem at hand.
  In particular, 
\begin{itemize}
\item we perform a Bayesian analysis,
\item we accommodate the info-gap approach suggested by
  \cite{2008:moffitt} to our extended model,
\item we investigate possible ways of constructing sets of
  probabilities (i.e. imprecise probability models) which are in some
  sense equivalent to the proposed info-gap model, and
\item we compare the decisions that these various models lead to.
\end{itemize}

\subsection{Bayesian Analysis: Minimising Expected Loss}
\label{sec:bayesian}

In this approach, we model the uncertainty about $d$ probabilistically and choose the $m$ that minimises expected loss. Of course, this is not the only strategy open to us when we use a Bayesian formulation, but it is the most common.

Using the same model for $d$ as in Eq.~\eqref{eq:prob:d:given:r:binom}, we can write the probability mass function for $d$ to be
\begin{equation}
\Pr(d) = \int_0^1 \Pr(d|r) p(r) dr,
\end{equation}
where $p(r)$ is a probability density function chosen to characterise our beliefs about $r$. For computational convenience and because the density is sufficiently flexible, we choose to model $r\sim\mbox{Beta}(\alpha,\beta)$, where $\alpha$ and $\beta$ are chosen so that the Beta density matches our beliefs. Using this density for $r$,
\begin{equation}
\Pr(d|\alpha,\beta) = {{n}\choose{d}}\frac{\mbox{Be}(\alpha+d,\beta+n-d)}
{\mbox{Be}(\alpha,\beta)},
\end{equation}
where $\mbox{Be}(\cdot,\cdot)$ denotes the usual Beta function. In this analysis, we consider various choices of parameters for the Beta distribution, as listed in Table~\ref{tab:bayes:params}.
Following Walley \cite{1996:walley::idm}, $t$ denotes the expectation $\frac{\alpha}{\alpha+\beta}$ of the prior, and $s$ denotes $\alpha+\beta$, so $\alpha=st$ and $\beta=s(1-t)$.
\begin{table}
  \centering
  \begin{tabular}{rrrrr|rr}
$t$ & $s$ & $\sigma$ & $\alpha$ & $\beta$ & $m^*$ & $E(L|\alpha,\beta)/10^6$ \\
\hline
0.0002 & 199.0 & 0.001 & 0.040 & 198.9 & 2 & 0.316 \\
0.0004 & 398.8 & 0.001 & 0.160 & 398.7 & 3 & 0.738 \\
0.0008 & 798.4 & 0.001 & 0.639 & 797.7 & 6 & 1.567 \\
0.0016 & 1596.4 & 0.001 & 2.554 & 1593.9 & 10 & 3.002 \\
\end{tabular}

  \caption{Choices for parameters of the Beta distribution over $r$. For information, we also state the standard deviation $\sigma$, and the $\alpha$ and $\beta$ parameters of the canonical parametrization.}
  \label{tab:bayes:params}
\end{table}
In our analyses, we have varied the expectation of $r$ by varying $t$, and we have chosen $s$ such that the standard deviation is 0.001 throughout. We have chosen to investigate this set of distributions because they cover a range of reasonable beliefs that we may have about $r$ when we are dealing with rare diseases.


For each Beta distribution, we can calculate the expected loss for each choice of $m$ using
\begin{equation}
E(L|\alpha,\beta)=\sum_{d=0}^{n} \Pr(d|\alpha,\beta)L(m,d,p,q,c,a,t).
\end{equation}
Figure~\ref{fig:bayes:loss} is a plot of the expected loss against number of tests for the three chosen distributions. The minimum expected loss occurs for $m=m^*$, given in Table~\ref{tab:bayes:params}.

\begin{figure}
\centering
\begin{tikzpicture}[xscale=0.16,yscale=1.21212121212]
\draw[->] (-0.3125,0) -- (28.125,0) node[right] {$m$};
\draw[->] (0,-0.04125) -- (0,3.7125) node[above] {$E(L|\alpha,\beta) / 10^6$};
\node at (0,0) [below] {$0$};
\draw (25,-0.04125) -- (25,0.04125) node[below] {$25$};
\node at (0,0) [left] {$0$};
\draw (-0.3125,3.3) -- (0.3125,3.3) node[left] {$3.3$};
\draw[solid, thick]  plot[smooth] file {infectious-bayes-loss-m-0.table};
\draw[dashed, thick]  plot[smooth] file {infectious-bayes-loss-m-1.table};
\draw[dotted, thick]  plot[smooth] file {infectious-bayes-loss-m-2.table};
\draw[solid, thick]  plot[smooth] file {infectious-bayes-loss-m-3.table};
\end{tikzpicture}
\caption{Expected loss plotted against $m$ for each choice of parameters. Lower curves correspond to lower values of $t$.}
\label{fig:bayes:loss}
\end{figure}
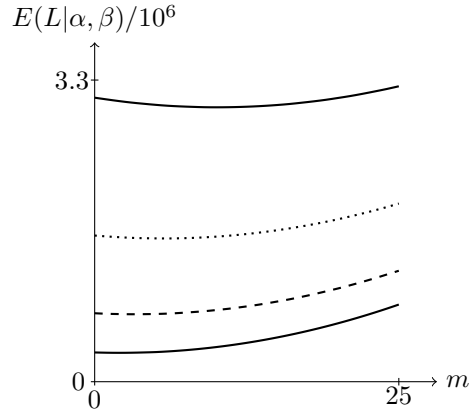

Also, by taking a probabilistic viewpoint, we can derive a distribution for the possible losses given a choice of $m$ and $p(r)$. If we choose $m=10$ and the worst case distribution  for $r$ (bottom row of Table~\ref{tab:bayes:params} where there is a greater than 30\% chance of there being diseased animals), we can find probabilities for the losses exceeding different values. For the thresholds in Figure~\ref{fig:bayes:solution}, as soon as we exceed the cost of termination and testing, we are essentially calculating the probability of disease outbreak. The jumps in this plot correspond precisely to the cost of testing ($81\,000$), and the cost of testing plus termination ($181\,000$).

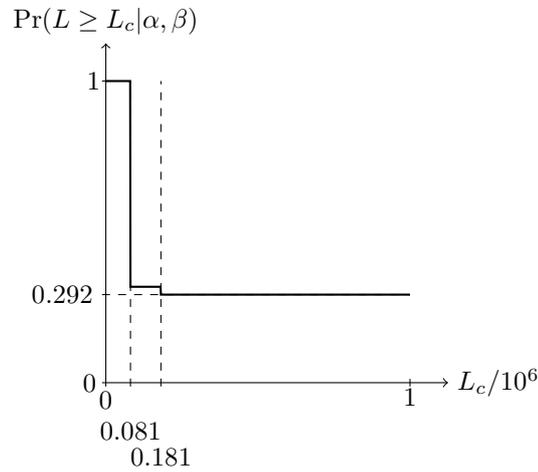
\begin{figure}
  \centering
  \begin{tikzpicture}[xscale=4.0,yscale=4.0]
\draw[->] (-0.0125,0) -- (1.125,0) node[right] {$L_c / 10^6$};
\draw[->] (0,-0.0125) -- (0,1.125) node[above] {$\Pr(L\ge L_c|\alpha,\beta)$};
\node at (0,0) [below] {$0$};
\draw (1,-0.0125) -- (1,0.0125) node[below] {$1$};
\node at (0,0) [left] {$0$};
\draw (-0.0125,1) -- (0.0125,1) node[left] {$1$};
\draw[thick]  plot file {bayes-solution-exceeding-0.table};
\draw[dashed] (0.081,-0.0125) node[below=1em]{$0.081$} -- (0.081,1);
\draw[dashed] (0.181,-0.0125) node[below=2em]{$0.181$} -- (0.181,1);
\draw[dashed] (-0.0125,0.292) node[left=0pt]{$0.292$} -- (1,0.292);
\end{tikzpicture}
  \caption{Probability of exceeding loss threshold $L_c$ for $m=10$ and $r\sim \mathrm{Beta}(\alpha,\beta)$ with parameters corresponding to the worst case considered (bottom row of Table~\ref{tab:bayes:params}), as a function of the critical cost $L_c$.}
  \label{fig:bayes:solution}
\end{figure}

It is worth noting that these probabilities are particularly sensitive to the choices we make for the parameters of the beta distribution. However, it is straightforward to assess the influence the parameters by tabulating the probability of $L_c$ exceeding the cost of testing and termination ($181\,000$ when $m=10$) as shown in Table~\ref{tab:bayes:sensitivity}.

\begin{table}
  \centering
  \begin{tabular}{r|rrrrr}
 & $t=0.0002$ & $t=0.0004$ & $t=0.0008$ & $t=0.0016$ & $t=0.0032$ \\
\hline
 $s=200$ & $0.030$ & $0.058$ & $0.113$ & $0.211$ & $0.371$ \\
 $s=400$ & $0.036$ & $0.070$ & $0.135$ & $0.249$ & $0.428$ \\
 $s=800$ & $0.040$ & $0.079$ & $0.150$ & $0.276$ & $0.466$ \\
$s=1600$ & $0.043$ & $0.084$ & $0.160$ & $0.292$ & $0.489$ \\
$s=3200$ & $0.045$ & $0.087$ & $0.166$ & $0.301$ & $0.501$ \\
\end{tabular}

  \caption{Probability of exceeding loss threshold $L_c=181\,000$ for $m=10$ and $r\sim \mathrm{Beta}(\alpha,\beta)$ for a wide range of parameter choices. Values are shown for $L_c=182\,000$, as the probability is discontinuous at $L_c=181\,000$ (see Figure~\ref{fig:bayes:solution}).}
  \label{tab:bayes:sensitivity}
\end{table}

\subsection{Info-Gap Analysis: Maximising Robustness}

Another approach to solve our decision problem, under severe uncertainty
about the exact probability $r$ of a single animal being
infected, is to select that decision which meets a given performance
criterion, $L_c$, under the largest possible range of $r$.
Given that we have almost no information about $r$,
except that it assumes a very small but otherwise unknown value, this simple model
seems to suffice for our purpose.
Obviously, one could define many other more refined info-gap
models---and our choice of model is just one example among many.
For a much
more detailed account, see \cite{2001:benhaim}.

Specifically, for a given value of $L_c$, the
largest possible range $[0,h]$ of $r$ for which we meet our
performance criterion is characterised by
\begin{equation}
  \hat{h}(m,L_c)
  =\max_{h\ge 0}
  \left\{
    h\colon \underbrace{\max_{\substack{r\in[0,h]\\r\le 1}}L(m|r)}_{M(m,h)}\le L_c
  \right\}
\end{equation}
The value $\hat{h}(m,L_c)$, as a function of $L_c$, is called the
\emph{robustness curve}: it tells us how uncertain about $r$ we can be
for our decision $m$ still to meet a given level of performance $L_c$.

A quick Poisson approximation reveals that as long as $\exp(-nh)$ is sufficiently close to $1$ (and this holds
for sufficiently small values of $nh$) the inner maximum over
$r\in[0,h]$ is achieved at $r=h$ (also see
Figure~\ref{fig:expected:loss}: the cost increases as $r$ increases),
so
\begin{equation}
  M(m,h)=L(m|h)
\end{equation}
Obviously, $M(m,h)$ increases as the horizon of uncertainty $h$
increases, whence $\hat{h}(m,L_c)$ as a function of $L_c$ is simply
the inverse of $M(m,h)$ as a function of $h$. In other words,
plotting $M(m,h)$ as a function of $h$ for different values of $m$
effectively gives us the robustness
curves. Figure~\ref{fig:robustness:curves} depicts them.

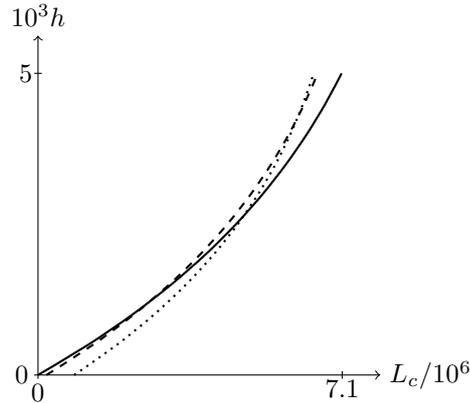
\begin{figure}
  \centering
  \begin{tikzpicture}[xscale=0.56338028169,yscale=0.8]
\draw[->] (-0.08875,0) -- (7.9875,0) node[right] {$L_c / 10^6$};
\draw[->] (0,-0.0625) -- (0,5.625) node[above] {$10^3 h$};
\node at (0,0) [below] {$0$};
\draw (7.1,-0.0625) -- (7.1,0.0625) node[below] {$7.1$};
\node at (0,0) [left] {$0$};
\draw (-0.08875,5) -- (0.08875,5) node[left] {$5$};
\draw[solid, thick]  plot[smooth] file {infectious-mu-m1-h.table};
\draw[dashed, thick]  plot[smooth] file {infectious-mu-m15-h.table};
\draw[dotted, thick]  plot[smooth] file {infectious-mu-m30-h.table};
\end{tikzpicture}
  \caption{Robustness curves $\hat{h}(m,L_c)$ as a function $L_c$ for
    test group sizes $m=1$ (solid), $m=15$ (dashed), and $m=30$ (dotted).}
  \label{fig:robustness:curves}
\end{figure}

The choices of $m$ which maximise robustness, for various values of
the critical cost $L_c$, are tabulated in
Table~\ref{table:infogap:solution}. For example, at an expected cost $L(m|r)$ of at most
$L_c=3\,000\,000$, we can safeguard against any probability of
infection $r\in[0,0.001\,479]$, by testing $10$ animals in the herd.
For comparison, the Bayesian expected cost $E(L|\alpha,\beta)$ for $t=0.0016$ is almost exactly equal to $3\,000\,000$ (last row of Table~\ref{tab:bayes:params}) when $m=10$, which is in agreement with the info-gap analysis.
Of course, it is to be noted that the actual inputs into each
decision model are different; nevertheless, both are assuming extreme
uncertainty of a similar order if we interpret the Bayesian analysis
as a worst case analysis---which we can easily do here due to the
simplicity of the model. Thus the agreement is not all that
surprising.

\begin{table}
  \centering
  \begin{tabular}{ccc}
$L_c / 10^6$ & $m^*$ & $10^3 \hat{h}(m^*,L_c)$ \\
\hline
0.5 & 2 & 0.207 \\
1.0 & 4 & 0.426 \\
1.5 & 5 & 0.661 \\
2.0 & 6 & 0.912 \\
2.5 & 8 & 1.184 \\
3.0 & 10 & 1.479 \\
3.5 & 11 & 1.803 \\
4.0 & 13 & 2.163 \\
\end{tabular}

  \caption{Info-gap choice of $m$, and corresponding horizon of uncertainty, for various values of the critical cost $L_c$.}
  \label{table:infogap:solution}
\end{table}

\subsection{Imprecise Probability Analysis: Maximality over a Partial Ordering}

There are several ways one might go about constructing an imprecise
probability model for our problem. As we have just seen, the info-gap
approach hinges on the idea of satisficing: we may start out with a
level of minimum performance that we hope to achieve, and the analysis
tells us how much uncertainty we can tolerate, at this price. One
might also interpret it conversely: for a given level of uncertainty,
the analysis tells us how much we might potentially pay, if it comes
to the worst.

Typical decision models for imprecise probabilities studied in the literature do not relate to
satisficing, yet, they do incorporate an idea similar to the info-gap
horizon of uncertainty: the imprecision of our model. Concretely,
consider the set $\mathcal{M}_h$ of all probability densities over $r$
that are zero outside $[0,h]$.\footnote{The adventurous reader may take all
  finitely additive probability measures $\mu$ on $[0,+\infty]$ with
  $\mu([0,h])=1$. We do without this complication: because all
  functions involved are continuous, those additional measures make no
  difference.}
We say that a choice $m$ \emph{dominates} a choice $m'$, and we
write $m\succ m'$ whenever the expected loss under $m$ is
strictly less than the expected loss under $m'$ over all densities
$p$ in $\mathcal{M}_h$, that is, whenever
\begin{equation}
  \int_0^\infty L(m|r)p(r) d r + \epsilon
  \le \int_0^\infty L(m'|r)p(r) d r
\end{equation}
for all probability densities $p$ in $\mathcal{M}_h$ and some
$\epsilon>0$.

Because $L(m|r)$ is continuous in $r$ for every $m$, this happens if and only if
\begin{equation}
  \min_{r\in[0,h]}\left[L(m'|r)-L(m|r)\right]>0
\end{equation}
Note that the $\min_{r\in[0,h]}$ operator can be thought of as a lower
expectation operator, or \emph{lower prevision} $\lpr_h$---we will come
back to this in Section~\ref{sec:discussion}.

One can easily prove that $\succ$ is a partial order, whence, a
sensible way to choose $m$ is to pick one which is not dominated by any other option, or in other words, which is \emph{maximal}. The idea of
choosing undominated options goes back at least to Condorcet
\cite[pp.~lvj--lxix, 4.$^{e}$ Exemple]{1785:condorcet}; also see
\cite[p.~55, Eq.~(1)]{1977:sen},
\cite[Sections~3.7--3.9]{1991:walley}, and
\cite{2007:troffaes:decision:intro} for further discussion.
Maximality has also been used in robust Bayesian models
\cite[\S 10.4]{2000:rios:bayesian:sens:anal},
under slightly different terminology.

Given our
partial order, one can easily show that an option $m$ is maximal if
and only if
\begin{equation}\label{eq:max:calculation}
  \min_{m'\in\{0,1,\dots,n\}}\max_{r\in[0,h]}\left[L(m'|r)-L(m|r)\right]\ge 0
\end{equation}
The inner maximum is almost always achieved at either $r=0$ or $r=h$,
simplifying practical calculations substantially.
Table~\ref{table:maximal:solution} depicts these values for all
choices of $m$, and varying values of $h$.
For ease of comparison with the info-gap solution,
we have chosen the same values of $h$ as those listed in
Table~\ref{table:infogap:solution}.

\begin{table}
  \centering
  \begin{tabular}{r|rrrrrrrr}
 & \multicolumn{8}{c}{$10^3 h$} \\
$m$ & $0.207$ & $0.426$ & $0.661$ & $0.912$ & $1.184$ & $1.479$ & $1.803$ & $2.163$ \\
\hline
$0$ & $-0.9$ & $-0.9$ & $-0.9$ & $-0.9$ & $-0.9$ & $-0.9$ & $-0.9$ & $-0.9$ \\
$1$ & $1.1$ & $1.1$ & $1.1$ & $1.1$ & $1.1$ & $1.1$ & $1.1$ & $1.1$ \\
$2$ & $1.4$ & $3.1$ & $3.1$ & $3.1$ & $3.1$ & $3.1$ & $3.1$ & $3.1$ \\
$3$ & $-0.6$ & $2.1$ & $4.9$ & $5.1$ & $5.1$ & $5.1$ & $5.1$ & $5.1$ \\
$4$ & $-3.1$ & $0.1$ & $2.9$ & $5.9$ & $7.1$ & $7.1$ & $7.1$ & $7.1$ \\
$5$ & $-7.7$ & $-1.9$ & $0.9$ & $3.9$ & $7.0$ & $9.1$ & $9.1$ & $9.1$ \\
$6$ & $-14.3$ & $-5.8$ & $-1.1$ & $1.8$ & $5.0$ & $8.4$ & $11.1$ & $11.1$ \\
$7$ & $-22.9$ & $-11.8$ & $-4.3$ & $-0.2$ & $2.9$ & $6.3$ & $9.9$ & $13.1$ \\
$8$ & $-33.4$ & $-19.7$ & $-9.5$ & $-2.4$ & $0.9$ & $4.2$ & $7.9$ & $11.8$ \\
$9$ & $-46.0$ & $-29.7$ & $-16.6$ & $-6.7$ & $-1.1$ & $2.2$ & $5.8$ & $9.7$ \\
$10$ & $-60.6$ & $-41.7$ & $-25.9$ & $-13.0$ & $-4.3$ & $0.1$ & $3.7$ & $7.6$ \\
$11$ & $-77.2$ & $-55.6$ & $-37.1$ & $-21.3$ & $-9.5$ & $-1.9$ & $1.7$ & $5.6$ \\
$12$ & $-95.8$ & $-71.6$ & $-50.3$ & $-31.6$ & $-16.8$ & $-5.9$ & $-0.4$ & $3.5$ \\
$13$ & $-116.4$ & $-89.6$ & $-65.6$ & $-44.0$ & $-26.1$ & $-11.9$ & $-2.9$ & $1.4$ \\
$14$ & $-139.1$ & $-109.7$ & $-82.9$ & $-58.4$ & $-37.4$ & $-20.0$ & $-7.4$ & $-0.7$ \\
$15$ & $-163.7$ & $-131.7$ & $-102.2$ & $-74.8$ & $-50.8$ & $-30.1$ & $-14.1$ & $-3.5$ \\
\end{tabular}

  \caption{Result of Eq.~\eqref{eq:max:calculation} (divided by a factor $10^3$ for everything to fit in the table). A positive value means that the corresponding choice of $m$ is optimal for the given horizon of uncertainty $h$.}
  \label{table:maximal:solution}
\end{table}

\section{Discussion}
\label{sec:discussion}

It is already well known that robust Bayesian models and imprecise probability
models are, for the most part, mathematically equivalent \cite[\S 5.9,
pp.~253--258]{1991:walley}.
Therefore, in the following, we will focus on info-gap and
maximality---the latter being used for both imprecise
and robust Bayesian models.

Interestingly, in our example, info-gap and maximality give essentially the same
result, with maximality refining the picture slightly:
for a given horizon of uncertainty $h$,
the maximal solutions are $\{1,\dots,m^*\}$,
where $m^*$ is the info-gap solution.
The most notable result is that all info-gap solutions are maximal.
Is this a coincidence? Formulating info-gap theory in terms of
lower previsions, we show that this holds
for arbitrary info-gap models and arbitrary lower previsions,
subject to the mild and usually satisfied conditions of
Theorem~\ref{thm:infogap:iff:gammaminimax} (for $\Gamma$-maximin) and
Theorem~\ref{thm:infogap:implies:maximal} (for maximality).

\subsection{Info-Gaps for Imprecise Probabilities}

Let $\omega\in\Omega$ be an uncertain parameter of interest---$\Omega$
can be an arbitrary set.
We must select a decision $d$ from a finite set $D$.
The loss function $L(d,\omega)$ represents
the loss (in utiles) if we choose $d$ and $\omega$ obtains.

Info-gap theory starts out with a family of nested sets
$U_h$ of $\Omega$, where $h$ is a non-negative parameter called
the \emph{horizon of uncertainty} and $U_h\subseteq U_{h'}$ whenever
$h\le h'$. In our example, $U_h$ was simply $[0,h]$.
Following that example, we saw that a very natural way to model these
nested sets $U_h$ in terms of sets of probabilities goes by way of a
\emph{vacuous model} $\mathcal{M}_h$, that is, the set of all
probability densities that are zero outside $U_h$.

If we denote the upper expectation induced by $\mathcal{M}_h$ by
$\upr_h$
(i.e. the pointwise lowest upper bound for the set of expectation operators
associated to $\mathcal{M}_h$),
then, formally, we define the info-gap solution
$D^*(L_c)\subseteq D$ at satisficing level $L_c$ as:
\begin{align}
  \hat{h}(d,L_c)&\coloneqq\max\left\{h\colon\upr_h(L(d,\cdot))\le L_c\right\} \\
  D^*(L_c)&\coloneqq\arg\max_{d\in D}\hat{h}(d,L_c)
\end{align}
Note that $D^*(L_c)$ will
usually be a singleton (or, the empty set).

Also note that the first equation may not have a solution:
this happens when $\upr_0(L(d,\cdot))>L_c$, that is, when
$d$ is infeasible even if we are as certain as can be ($h=0$).

Now, from the point of view of imprecise
probability, there is no compelling reason to restrict ourselves to
vacuous models. In fact, we can allow $\mathcal{M}_h$ to be any set of
probability densities on $\Omega$---we already picked a more general set in our Bayesian analysis in Section~\ref{sec:bayesian}---under one restriction: a close
inspection of the theory reveals that a crucial property that the
info-gap model relies on is that the worst case cost,
$\upr_h(L(d,\cdot))$ is non-decreasing as the horizon of uncertainty $h$
increases. Whence, we logically impose that
$\mathcal{M}_h\subseteq\mathcal{M}_{h'}$ whenever $h<h'$.

So, instead of starting out from a family of nested
subsets $U_h$ of $\Omega$, we start out from a family of nested sets
$\mathcal{M}_h$ of probability densities on $\Omega$.
For instance, in our example, the uncertainty was over the values of $r$, so
$\mathcal{M}_h$ would be some arbitrary set of probability
distributions for $r$. In the Bayesian analysis, we restricted
$\mathcal{M}_h$ to the Beta family, for computational convenience.

One can of course
interpret this again as an info-gap model,
where the uncertain parameter is now the probability density over
$\Omega$---also
see \cite[pp.~1062--1063]{2009:benhaim:cholesterol} for an informal
discussion of this approach.
The imprecise
Dirichlet model \cite{1996:walley::idm} is an example of such
family (with $h=1/s$).
For another example, see \cite{2008:ferson::pboxes:infogap} for a discussion of
nested sets of p-boxes and the resulting info-gap analysis.

\subsection{Main Result}

The next result links the info-gap solution to the so-called
$\Gamma$-minimax\footnote{$\Gamma$-minimax minimises the upper
  expectation of the loss \cite{1945:wald,1989:gilboa::maximin}.} solution.
See \cite[p.~1061,
Fig.~14]{2009:benhaim:cholesterol} for an informal discussion of a
very similar equivalence between info-gap and minimax.
Interestingly, \cite[p.~4, Table~1]{2006:sniedovich:infogapproof}
constructs a maximin model that is fully equivalent
to an info-gap model. The result below is quite different,
as we change neither variables nor loss function.

Effectively, we show that, if certain fairly mild conditions are satisfied,
the info-gap solution coincides with the $\Gamma$-minimax
solution---parametrized over the horizon of uncertainty $h$---of
exactly the same problem.
In fact, such approach has already been used as a technique
to solve info-gap problems
(see for instance \cite[pp.~1690--1691]{2006:moilanen:infogap}).
Below, we identify sufficient conditions 
for this to work, and provide two counterexamples
in cases where one of these conditions fails.

For convenience, we denote the $\Gamma$-minimax loss at horizon $h\in\reals^+$
by $L^*(h)$:\footnote{We set $\reals^+\coloneqq\{x\in\reals\colon x\ge 0\}$.}
\begin{equation}\label{eq:def:lstar}
  L^*(h)\coloneqq\min_{d\in D}\upr_{h}(L(d,\cdot)).
\end{equation}
Note that $L^*$ is non-decreasing as a function of $h$,
because each $\upr_{h}(L(d,\cdot))$ is.
We will be interested in the \emph{right derivative} of $L^*$ at a point $h$:
\begin{equation}
  \partial_+ L^*(h)\coloneqq \lim_{\substack{h'\to h\\ h'>h}}\frac{L^*(h')-L^*(h)}{h'-h}
\end{equation}
Because $L^*$ is non-decreasing, $\partial_+ L^*(h)$ is non-negative
and well defined (possibly $+\infty$) for every $h\in\reals^+$.

\begin{theorem}\label{thm:infogap:iff:gammaminimax}
  Let $h\in\reals^+$ and $L_c\in\reals$.
  The info-gap solution $D^*(L_c)$ coincides with
  $\Gamma$-minimax solution with respect to
  $\upr_{h}$, that is,
  \begin{equation}
    D^*(L_c)=\arg\min_{d\in D}\upr_{h}(L(d,\cdot)),
  \end{equation}
  whenever the following conditions are satisfied:
  \begin{align}
    \label{eq:lc:increasing}
    \partial_+L^*(h)&>0, \text{ and} \\
    \label{eq:lc:and:h}
    L^*(h)&=L_c
  \end{align}
\end{theorem}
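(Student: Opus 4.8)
\emph{Proof proposal.} The plan is to establish the two set inclusions $\arg\min_{d\in D}\upr_h(L(d,\cdot))\subseteq D^*(L_c)$ and $D^*(L_c)\subseteq\arg\min_{d\in D}\upr_h(L(d,\cdot))$ separately, leaning throughout on the single structural fact that, since $\mathcal{M}_h\subseteq\mathcal{M}_{h'}$ for $h<h'$, each map $g_d(h)\coloneqq\upr_h(L(d,\cdot))$ is non-decreasing in $h$. First I would extract a lower and an upper bound on the robustnesses. From condition~\eqref{eq:lc:and:h}, every option $d$ attaining the minimum in $L^*(h)$ satisfies $g_d(h)=L^*(h)=L_c$, so $h$ itself lies in $\{h'\colon g_d(h')\le L_c\}$ and hence $\hat h(d,L_c)\ge h$. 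From condition~\eqref{eq:lc:increasing}, positivity of the right derivative yields a $\delta>0$ with $L^*(h')>L_c$ for all $h'\in(h,h+\delta)$; since $g_d(h')\ge L^*(h')>L_c$ for every $d$ on this interval, monotonicity propagates the strict inequality to all $h''>h$, so that $\hat h(d,L_c)\le h$ for every $d\in D$.

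Combining these bounds yields both inclusions. Because $D$ is finite the minimum defining $L^*(h)$ is attained, so $\Gamma$-minimax options exist; each such option has $\hat h(d,L_c)=h$ while no option has robustness exceeding $h$, whence $\max_{d\in D}\hat h(d,L_c)=h$ and every $\Gamma$-minimax option lies in $D^*(L_c)$. Conversely, if $d\in D^*(L_c)$ then $\hat h(d,L_c)=h$, and since $\hat h(d,L_c)$ is by definition the maximum of $\{h'\colon g_d(h')\le L_c\}$, this maximum is attained at $h$, giving $g_d(h)\le L_c$; but $L_c=L^*(h)=\min_{d'\in D}g_{d'}(h)\le g_d(h)$, so $g_d(h)=L^*(h)$ and $d$ is $\Gamma$-minimax. (Options that are infeasible even at $h=0$, i.e. with $\upr_0(L(d,\cdot))>L_c$, satisfy $g_d(h)>L_c$ and so lie in neither set, causing no difficulty.)

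The step I expect to be most delicate is the attainment used in the reverse inclusion: passing from $\hat h(d,L_c)=h$ to $g_d(h)\le L_c$ requires that the maximum in the definition of $\hat h$ genuinely be achieved for an arbitrary info-gap solution $d$. Monotonicity alone only guarantees that $\{h'\colon g_d(h')\le L_c\}$ is a down-closed interval, so a failure of left-continuity of $h\mapsto\upr_h(L(d,\cdot))$ at its threshold could leave the supremum unattained and invalidate the argument. In the running example this is harmless, since $g_d(h)=\max_{r\in[0,h]}L(m|r)$ is continuous in $h$ whenever $L(m|\cdot)$ is; more generally I would either take such left-continuity of $h\mapsto\upr_h(L(d,\cdot))$ as a standing regularity assumption—implicit already in writing $\hat h$ as a maximum rather than a supremum—or verify it directly from the structure of the nested family $\mathcal{M}_h$. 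Once attainment is secured, the remainder is the routine bookkeeping of turning the one-sided bounds $\hat h(d,L_c)\ge h$ and $\hat h(d,L_c)\le h$ into equality of the $\arg\min$ and $\arg\max$ sets.
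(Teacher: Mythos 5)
Your proof is correct and follows essentially the same route as the paper's: both use condition~\eqref{eq:lc:and:h} to show that $\Gamma$-minimax options remain feasible at horizon $h$, and condition~\eqref{eq:lc:increasing} together with the monotonicity of $L^*$ to show that no option is feasible beyond $h$ (equivalently, that $\{h'\colon L^*(h')\le L_c\}=[0,h]$), the paper merely packaging this as a chain of set inclusions where you argue the two directions separately. Your closing observation about attainment (left-continuity of $h\mapsto\upr_h(L(d,\cdot))$) identifies an implicit assumption that the paper's own proof also makes silently in treating $\hat h$ as a maximum, so it is a useful remark rather than a defect.
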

\begin{proof}
  By definition, $d^*\in D^*(L_c)$ whenever, for all $d\in D$,
  \begin{equation}
    \hat{h}(d^*,L_c)\ge \hat{h}(d,L_c)
  \end{equation}
  By definition of $\hat{h}(d,L_c)$, this is equivalent to saying that
  \begin{equation}
    \left\{h'\colon\upr_{h'}(L(d^*,\cdot))\le L_c\right\}
    \supseteq
    \cup_{d\in D}
    \left\{h'\colon\upr_{h'}(L(d,\cdot))\le L_c\right\}
  \end{equation}
  Rewriting the above expression, we have, equivalently,
  \begin{equation}
    \left\{h'\colon\upr_{h'}(L(d^*,\cdot))\le L_c\right\}
    \supseteq
    \left\{h'\colon\min_{d\in D}\upr_{h'}(L(d,\cdot))\le L_c\right\}
  \end{equation}
  or, by Eq.~\eqref{eq:def:lstar},
  \begin{equation}
    \left\{h'\colon\upr_{h'}(L(d^*,\cdot))\le L_c\right\}
    \supseteq
    \left\{h'\colon L^*(h')\le L_c\right\}
  \end{equation}
  By Eq.~\eqref{eq:lc:increasing} and the non-decreasingness of $L^*$,
  it is easily seen that $L^*(h')>L^*(h)$ for all $h'>h$.
  Moreover, by Eq.~\eqref{eq:lc:and:h}, $L_c=L^*(h)$.
  Concluding, the set on the right hand side is a fancy
  way of writing $[0,h]$. Therefore, the above is equivalent to
  \begin{equation}
    \upr_{h}(L(d^*,\cdot))\le L_c
  \end{equation}
  Once more by Eqs.~\eqref{eq:lc:and:h} and~\eqref{eq:def:lstar}, this is equivalent to saying
  that $d^*$ is a $\Gamma$-minimax solution with respect to
  $\upr_{h}$.
\end{proof}

Interestingly, for given $L_c$ such that
\begin{equation}
  \min_{d\in D}\upr_0(L(d,\cdot))\le L_c\le \min_{d\in D}\upr_\infty(L(d,\cdot))
\end{equation}
it holds that Eq.~\eqref{eq:lc:and:h} has a unique
solution for $h\ge 0$ whenever $L^*$ is strictly increasing \emph{and continuous} in
$h$.
This means that we are effectively free to choose $L_c$ under the
additional assumption of continuity.

To see why we are not free to
choose $L_c$ when continuity is not
satisfied---and violate Eq.~\eqref{eq:lc:and:h}---imagine for instance that:
\begin{align}
  \upr_{h}(L(d_1,\cdot))&=
  \begin{cases}
    h & \text{if }h\le 1 \\ 
    3+h & \text{if }h> 1
  \end{cases}
  &
  \upr_{h}(L(d_2,\cdot))&=
  \begin{cases}
    1+h & \text{if }h\le 1 \\
    4+h & \text{if }h> 1
  \end{cases}
\end{align}
Then, for $L_c=3$, we have that $D^*(3)=\{d_1,d_2\}$ because $\hat{h}(d,3)=1$ for both
$d_1$ and $d_2$, yet obviously $d_1$ is $\Gamma$-minimax (it could even
be uniformly dominated by $d_2$). Effectively, this is simply a technical
limitation of the info-gap model, as any reasonable person would 
probably agree with the $\Gamma$-minimax solution.

What happens if Eq.~\eqref{eq:lc:increasing} is violated?
Imagine for instance that:
\begin{align}
  \upr_{h}(L(d_1,\cdot))&=
  \begin{cases}
    0 & \text{if }h\le 1 \\ 
    h-1 & \text{if }h> 1
  \end{cases}
  &
  \upr_{h}(L(d_2,\cdot))&=
  \begin{cases}
    0 & \text{if }h\le 2 \\
    h-2 & \text{if }h> 2
  \end{cases}
\end{align}
With this choice, $L^*$ is continuous, but not strictly increasing.
For $h=1$, we have that $\arg\min_{d\in D}\upr_{h}(L(d,\cdot))=\{d_1,d_2\}$
because $\upr_{h}(L(d_1,\cdot))=\upr_{h}(L(d_2,\cdot))=0$ for $h=1$.
Yet, for $L_c=L^*(1)=0$ we have that $\hat{h}(d_1,L_c)=1<\hat{h}(d_2,L_c)=2$,
so only $d_2$ is optimal according to the info-gap criterion.
This example uncovers a technical
limitation of the $\Gamma$-maximin model, as for this case,
any reasonable person would 
probably agree with the more robust info-gap solution.

Now, it is well known that every $\Gamma$-minimax solution is also
maximal (see for instance \cite{2007:troffaes:decision:intro}),
whence, we conclude:

\begin{theorem}\label{thm:infogap:implies:maximal}
  Suppose $\partial_+L^*(h')>0$ for all $h'\in [0,h]$.
  Then, for all $h'\in[0,h]$, every info-gap decision $d^*\in D^*(L^*(h'))$ is
  maximal with respect to $\upr_{h}$:
  \begin{equation}
    \bigcup_{h'\in[0,h]}D^*(L^*(h'))
    \subseteq
    \{d\in D\colon (\forall d'\in D)(\upr_h(L(d',\cdot)-L(d,\cdot))\ge 0)\}
  \end{equation}
\end{theorem}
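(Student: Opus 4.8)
The plan is to chain three ingredients: Theorem~\ref{thm:infogap:iff:gammaminimax}, the standard implication that every $\Gamma$-minimax solution is maximal (cited just before the statement), and the monotonicity of the upper previsions $\upr_{h'}$ in the horizon. Since the conclusion is an inclusion indexed by $h'\in[0,h]$, I would prove membership pointwise in $h'$ and then take the union over $h'$ at the end.

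First I would fix an arbitrary $h'\in[0,h]$ and an arbitrary $d^*\in D^*(L^*(h'))$, and set $L_c:=L^*(h')$. The two hypotheses of Theorem~\ref{thm:infogap:iff:gammaminimax} are then immediate: condition~\eqref{eq:lc:increasing} is exactly the standing assumption $\partial_+L^*(h')>0$, and condition~\eqref{eq:lc:and:h} holds by the very choice $L_c=L^*(h')$. Applying that theorem with horizon $h'$ yields $D^*(L^*(h'))=\arg\min_{d\in D}\upr_{h'}(L(d,\cdot))$, so $d^*$ is a $\Gamma$-minimax solution with respect to $\upr_{h'}$.

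Next I would invoke the fact that a $\Gamma$-minimax solution is maximal with respect to the same upper prevision. The short argument behind it is that, were $\upr_{h'}(L(d',\cdot)-L(d^*,\cdot))<0$ for some $d'$, subadditivity of $\upr_{h'}$ would give $\upr_{h'}(L(d',\cdot))<\upr_{h'}(L(d^*,\cdot))$, contradicting $\Gamma$-minimaxity of $d^*$. Hence $\upr_{h'}(L(d',\cdot)-L(d^*,\cdot))\ge 0$ for every $d'\in D$, i.e.\ $d^*$ is maximal with respect to $\upr_{h'}$.

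The decisive step, and where I expect the real content to sit, is upgrading maximality from $\upr_{h'}$ to $\upr_h$. Because the model is nested, $\mathcal{M}_{h'}\subseteq\mathcal{M}_h$ for $h'\le h$, and so the induced upper expectations satisfy $\upr_{h'}(f)\le\upr_h(f)$ for every $f$. Taking $f=L(d',\cdot)-L(d^*,\cdot)$ gives $\upr_h(L(d',\cdot)-L(d^*,\cdot))\ge\upr_{h'}(L(d',\cdot)-L(d^*,\cdot))\ge 0$ for all $d'\in D$, which is precisely maximality with respect to $\upr_h$. As $h'$ and $d^*$ were arbitrary, the union over $h'\in[0,h]$ yields the stated inclusion. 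The one subtlety worth flagging is the direction of this monotonicity: maximality transfers towards larger imprecision (bigger credal set, hence bigger upper expectation, so a non-negative value stays non-negative), not towards smaller, so it is essential that the target horizon $h$ is the largest one in the range $[0,h]$.
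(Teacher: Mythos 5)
Your proposal is correct and follows essentially the same route as the paper's own (very terse) proof: apply Theorem~\ref{thm:infogap:iff:gammaminimax} at horizon $h'$ with $L_c=L^*(h')$, invoke the standard fact that $\Gamma$-minimax implies maximality, and use the nestedness $\mathcal{M}_{h'}\subseteq\mathcal{M}_h$ to transfer maximality from $\upr_{h'}$ to $\upr_h$. Your write-up merely makes explicit the two ingredients the paper leaves implicit (the subadditivity argument behind ``$\Gamma$-minimax implies maximal'' and the direction of the monotonicity $\upr_{h'}\le\upr_h$), which is exactly the right level of care.
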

\begin{proof}
  Use the preceding theorem, and note that every $\Gamma$-minimax
  with respect to $\lpr_{h'}$ is maximal with respect to $\lpr_h$,
  provided that $h'\in[0,h]$.
\end{proof}

Again, if in addition $L^*$ is continuous on $[0,h]$, then the range
for $L^*(h')$ in the above theorem is simply an interval:
\begin{equation}
  \{L^*(h')\colon h'\in[0,h]\}
  =
  \left[\min_{d\in D}\upr_0(L(d,\cdot)),\min_{d\in D}\upr_{h}(L(d,\cdot))\right].
\end{equation}

Summarising, Theorem~\ref{thm:infogap:iff:gammaminimax} provides sufficient
conditions for the info-gap solution, for fixed values of
$L_c$ and $h$, to be equal to the $\Gamma$-minimax
solution:
proponents of either approach are `observationally equivalent' \cite[Sec.~7]{2009:benhaim:cholesterol}.

Theorem~\ref{thm:infogap:implies:maximal} shows that a full fledged
info-gap analysis, varying the horizon of uncertainty along an
interval, yields an elegant approach to capture maximal
solutions. In our example, we actually find \emph{all} maximal
options---in general this may not be the case. Still, it shows
that an info-gap analysis can be of value even if maximality
is the final goal:
\begin{itemize}
\item an info-gap analysis might give a rough idea of the size of the
  maximal set (in particular, it provides a lower bound for it),
\item the analysis can be an appealing way to represent the maximal
  solution graphically (as in Figure~\ref{fig:robustness:curves}), and
\item as robustness curves show the trade-off between uncertainty and
  cost, they are also obviously useful in the process of elicitation.
\end{itemize}

\section{Conclusion}
\label{sec:conclusion}

We constructed a simple model for inspecting animal herds for dangerous
exotic infections, building further on the work of Moffitt et
al. \cite{2008:moffitt}. We solved the problem using three popular
decision methodologies suited for dealing with severe uncertainty:
Bayesian analysis, info-gap analysis, and imprecise probability theory (maximality and $\Gamma$-minimax). We found that, in this example, the solutions of the info-gap and imprecise
models essentially coincide, although the way they arrive at it is very different.

We explored the theoretical link between info-gap theory, $\Gamma$-minimax,
and maximality. We established that, under rather general conditions,
every info-gap solution is maximal. Therefore, the set of maximal
options can be inferred at least partly, and sometimes wholly, from an info-gap analysis.
Consequently, robustness curves also make sense in an imprecise
probability (or, robust Bayesian) context, for exploring maximal options, and for
elicitation, when studying the trade-off between uncertainty and cost
that is often of interest to decision makers.

\section*{Acknowledgements}

The authors thank Kirsty Hinchliff and Ben Powell, who have been
involved with an embryonic draft of this paper. The paper has also
benefited greatly from discussions with Yakov Ben-Haim, Frank
Coolen, and Moshe Sniedovich, to who we extend our sincerest thanks.

\bibliographystyle{amsplainurl}
\bibliography{all}

\end{document}